\definecolor{myblue}{RGB}{34,31,217}
\definecolor{mycyan}{gray}{.7}
\definecolor{Gray}{gray}{0.9}
\newtheorem{theorem}{Theorem}
\newtheorem{lemma}{Lemma}
\begin{document}

\title{\vspace{-1ex}\LARGE\textbf{Multidimensional Resource Fragmentation-Aware Virtual Network Embedding in MEC Systems Interconnected by Metro Optical Networks}
}

\author[1]{\normalsize Yingying Guan}
\author[2]{\normalsize Qingyang Song}
\author[3]{\normalsize Weijing Qi}
\author[4]{\normalsize Ke Li}
\author[5]{\normalsize Lei Guo}
\author[6]{\normalsize Abbas Jamalipour}

\affil[1]{\normalsize School of Communication and Information Engineering, Chongqing University of Posts and Telecommunications, Chongqing 400065, P. R. China}
\affil[2,3,5]{\normalsize School of Communication and Information Engineering, Chongqing University of Posts and Telecommunications, Chongqing 400065, P. R. China}
\affil[4]{\normalsize Department of Computer Science, University of Exeter, EX4 4QF, Exeter, UK}
\affil[6]{\normalsize School of Electrical and Information Engineering, University of Sydney, Sydney, NSW 2006, Australia}
\affil[$\ast$]{\normalsize Email: \texttt{songqy@cqupt.edu.cn}}

\date{}
\maketitle

\vspace{-3ex}
{\normalsize\textbf{Abstract: } 
The increasing demand for diverse emerging applications has resulted in the interconnection of multi-access edge computing (MEC) systems via metro optical networks. To cater to these diverse applications, network slicing has become a popular tool for creating specialized virtual networks. However, resource fragmentation caused by uneven utilization of multidimensional resources can lead to reduced utilization of limited edge resources.
To tackle this issue, this paper focuses on addressing the multidimensional resource fragmentation problem in virtual network embedding (VNE) in MEC systems with the aim of maximizing the profit of an infrastructure provider (InP). The VNE problem in MEC systems is transformed into a bilevel optimization problem, taking into account the interdependence between virtual node embedding (VNoE) and virtual link embedding (VLiE). To solve this problem, we propose a nested bilevel optimization approach named BiVNE. The VNoE is solved using the ant colony system (ACS) in the upper level, while the VLiE is solved using a combination of a shortest path algorithm and an exact-fit spectrum slot allocation method in the lower level.
Evaluation results show that the BiVNE algorithm can effectively enhance the profit of the InP by increasing the acceptance ratio and avoiding resource fragmentation simultaneously.
}

{\normalsize\textbf{Keywords: } }network slicing, edge network, optical network, load balancing, bilevel programming.


\section{Introduction}
\label{sec:introduction}
Multi-access edge computing (MEC) systems, comprising MEC servers and base stations, are increasingly being employed as edge clouds to support various emerging applications that are delay-sensitive and computation-intensive \cite{MECETSI,MaoYZHL17, MachB17}.
A huge amount of data generated by these emerging applications require the collaboration of multiple MEC servers, prompting a lot of frequent information interactions between MEC systems \cite{LimLHJLYNM20}.
These MEC systems are inevitably interconnected by metro optical networks due to their high flexibility and high capacity \cite{ChatterjeeSO15}, which boosts the heterogeneity of network resources and therefore the difficulty and the complexity of allocating these multidimensional resources \cite{HuangYYZC20}.
Network slicing appears to resolve the network resource allocation problem efficiently by providing customized network slices (i.e., virtual networks) for service providers (SPs) to meet the need of diverse applications \cite{YangXGQCC21, WangWYH22}.
When receiving a request for a network slice from an SP, an infrastructure provider (InP) conducts a virtual network embedding (VNE) process, that is, allocates some physical nodes (PNs) and physical links (PLs) to the network slice to satisfy the corresponding resource requirements.
It has been confirmed that the VNE is an NP-hard problem \cite{FischerBBMH13}.

Previous research has extensively investigated the VNE problem in data center networks, considering congestion control \cite{PhamHC20}, energy consumption \cite{EramoMA16,ZhangCSL22,BillingsleyLMMG19,BillingsleyLMMG20,BillingsleyMLMG20,BillingsleyLMMG21}, failure avoidance  \cite{ShahriarACKBMZ20,AyoubBMT22,DehuryS22}, and profit  growth \cite{NguyenH22,SongCGYKZ21,GongJWZ16, ZhaoSB13}.
However, the VNE problem becomes more complex in edge cloud networks, which consist of multiple MEC-equipped base stations (i.e., edge nodes) connected by optical links.
First, limited resources in the edge network.
Second, optical spectrums in the optical links.
Optical spectrum allocation must adhere to spectrum continuity and consistency constraints, resulting in the generation of optical spectrum fragments at links \cite{ChatterjeeWO21}.
Third, different types of resources (i.e., communication and computing resources) provided by MEC-equipped base stations.
Unbalanced utilization of different types of resources leads to resource fragmentation at edge nodes.
As a result, the multidimensional resource fragments generated on links and nodes deteriorate the utilization of inherently limited edge resources, which significantly damages the profit of InP~\cite{ChenLY18,ZouJYZZL19,LiZZL09,LiZLZL09,Li19}.

To improve the efficiency of resource utilization, many works on VNE have been done in the optical and edge networks, respectively.
Regarding edge networks, both radio and computing resources are considered in \cite{joint22TNET, Dynamic20TVT, Time-Sensitive21TNSM}.
The authors of \cite{joint22TNET} investigated the problem of joint assignment of radio and computing resources to network slices, as well as the management of the two types of resources within each slice.
The authors of \cite{Dynamic20TVT} improved the operator's revenue through the joint control of the number of radio channels and the CPU clock speed.
The authors of \cite{Time-Sensitive21TNSM} investigated a service function chain placement problem to minimize service interruption while optimizing radio and computing resource utilization.
However, in those studies, the uneven utilization of radio and computing resources is ignored.
Moreover, these studies lack the consideration of optical link characteristics.
Regarding optical networks, some efforts have been made to avoid optical spectrum fragmentation \cite{ZhuZSLCG18, WeiGWYL19, FanXCCY21}. A common work in this studies is that a parameter related to the utilization of the optical spectrum is first defined, and then this parameter is used to guide the VNE process. In \cite{ZhuZSLCG18}, a fragmentation-aware VNE algorithm is proposed based on a virtual-auxiliary-graph approach.
A parameter related to the fragment size on the attached links of every PN is defined, and then the parameter is used to be the weight for performing virtual node embedding (VNoE) on every virtual-auxiliary-graph.
In \cite{WeiGWYL19}, a matching factor is defined according to the degree of contiguity of the free spectrum on the connected links of each PN.
After seeking the proper PNs by greedily using the matching factor, the virtual link embedding (VLiE) is conducted to find an available path in which the difference between the size of the selected spectrum block and that being requested is the smallest.
In \cite{FanXCCY21}, a VNE method that utilizes complete path evaluation and node proximity sensing is presented.
However, the VNoE and the VLiE are still performed separately, thus the ignorance of the coupling between these two potentially increases the probability of resource fragmentation.

In this paper, with the consideration of the multidimensional resource fragmentation and the dependency between the VNoE and the VLiE, we study the VNE problem in MEC systems interconnected by metro optical networks. Specifically, we jointly embed nodes and links with the goal of maximizing the profit of InP.
There are two key problems tackled.
Firstly, how to avoid the resource fragmentation on both PNs and PLs efficiently?
Secondly, how to take into full consideration of the dependency between VNoE and VLiE during the process of resource allocation? The main contributions of this paper are summarized as follows:
\begin{itemize}
\item[1)]An VNE problem for effective utilization of limited edge cloud resources is studied with the best-effort avoidance of multidimensional resource fragmentation. To quantify the multidimensional resource fragmentation, a level of imbalanced resource utilization and a threshold for judging whether a continuous free spectrum block is a fragment are defined to calculate the resource fragments at the PNs and PLs, respectively.
\item[2)]Considering the dependency between VNoE and VLiE, we transform the VNE problem into a bilevel problem, where the upper layer is the problem of selecting PNs for VNR, and the lower layer is the problem of link selection and spectrum resource allocation.
\item[3)]A nested bilevel VNE method (BiVNE) is proposed to solve the bilevel problem. Specifically, the upper layer problem is solved by a method based on ant colony system (ACS) and the lower layer problem is solved by the Dijkstra algorithm and an exact fit spectrum slot assignment method.
\item[4)]Extensive simulation results validate the performance of BiVNE. Compared with some state-of-the-art algorithms, BiVNE can improve the profit of InP through reducing resource consumption while increasing acceptance ratio.
\end{itemize}

The remainder of the paper is structured as follows. The network model and problem formulation are described in detail in Section \ref{sec:preliminaries}. The problem transformation and the proposed approach are described in Section \ref{sec:proposal}. The simulation setup and corresponding results are presented in section \ref{sec:settings} and section \ref{sec:experiments}. Section \ref{sec:conclusions} concludes this paper.


\section{Preliminaries}
\label{sec:preliminaries}
\subsection{Physical Network}

A physical network topology can be represented as an undirected graph ${G^s}=({N^s},{E^s})$, where ${N^s}$ and ${E^s}$ respectively represent the sets of PNs and PLs which are optical.
 $|N^s|$ is the total number of the PNs.
 As shown in Fig. \ref{fig_VNE_model}, each PN ${n^s}\in{N^s}$ is initially equipped with an MEC server with computing capacity $C(n^s)$ and a macro base station with $W(n^s)$ wireless channels.
The geographical location of $n^s$ is $loc(n^s)$.
The entire spectrum in each optical link ${e^s } \in {E^s}$ is divided into $B(e^s)$ granular frequency slots (FSs).
The maximum available adjacent slot block (MACSB) $m_{e^s}\in M_{e^s}$ is the block of more than one unoccupied FS that exists in the optical spectrum.
We define ${P^s}$ as a set of circle-free paths in ${G^s}({V^s},{E^s})$.
The available bandwidth on path ${p^s \in P^s}$ can be computed by ${B(p^s) = {\min} \{B(e^s)\mid I_{e^s}^{p^s} = 1 \}}$, where $I_{e^s}^{p^s}$ is a binary variable indicating whether ${p^s}$ passes optical link ${e^s}$ (${I_{e^s}^{p^s} = 1}$), or not (${I_{e^s}^{p^s} = 0}$).

\begin{figure}[ht]
    \centering
    \includegraphics[width=80mm]{{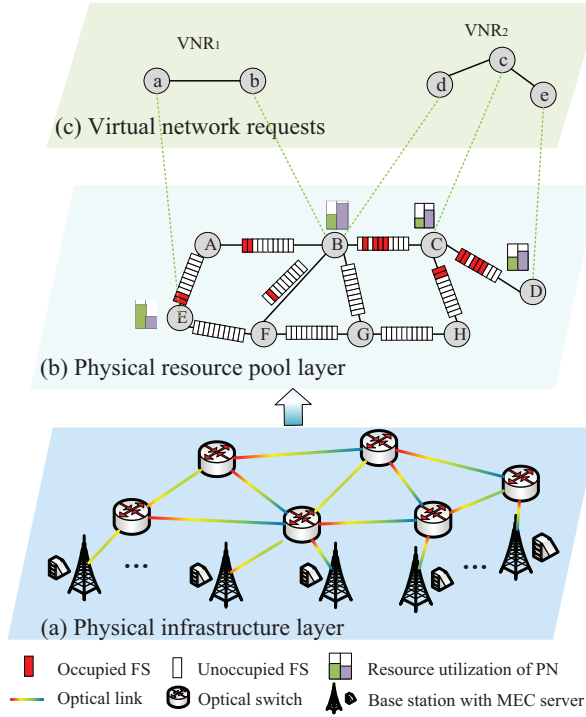}}
    \vspace{0cm}
    \caption {VNE example. }
    \label{fig_VNE_model}
\end{figure}

\subsection{Virtual Network Request (VNR)}
Similar to the physical network topology, a VNR topology can be also represented as an undirected graph ${G^r}=({N^r},{E^r})$. The set of VNRs denoted as $\Upsilon$. ${N^r}$ is the set of virtual nodes (VNs), and ${E^r}$ represents the set of virtual links (VLs). ${n^r}\in{N^r}$ represents a VN. $|N^r|$ is the total number of VNs in ${G^r}$. Each VN ${n^r}$ not only requires the amount of computing resource $C(n^r)$ and the number of wireless channels $W(n^r)$, but also prefers a corresponding physical area, where $loc(n^r)$ and $\Delta loc(n^r)$ are respectively the center and the radius of the area.
Thus, ${n^r}$ only can be embedded onto a set of candidate PNs within the preferred area, i.e.
$\Phi_N(n^r) = \{n^s\in{N^s} \mid dis[loc(n^r),loc(n^s)] \le \Delta loc(n^r)\}$,
 where $dis[loc(n^r),loc(n^s)]$ represents the distance between $n^r$ and $n^s$.
The amount of FSs required by VL ${e^r}$ is $B(e^r)$.
${s(e^r)}$ and ${t(e^r)}$ denote the end nodes of ${e^r}$.
Each VL ${e^r}$ is associated with a candidate physical path set, $\Phi_E(e^r) = \{p^s \mid B(p^s) \geq B(e^r),\  {s(p^s)} \in \Phi_N(s(e^r)),\  {t(p^s)}\in \Phi_N(t(e^r)) \}$. The main notations used in the paper are summarized in Table \ref{tab_notions}.

\begin{table}[ht]    
\caption{List of Notations}  
\centering  
   \begin{tabular}{c|lp{10cm}p{2cm}}   
        \toprule
        \textbf{Notation}       &\textbf{Description}\\
        \midrule

         ${G^s}=({N^s},{E^s})$       &Graph for physical network\\

         $n^s$         &Physical node, ${n^s}\in{N^s}$\\

         $e^s$         &Physical link, ${e^s } \in {E^s}$\\

         $loc(n^s)$    &Geographical location of ${n^s}$ \\

         $C(n^s)$      &Computing capacity of $n^s$ \\
         $C_l(n^s)$    &Occupied computing capacity of $n^s$\\

         $W(n^s)$      &Total number of wireless channels of $n^s$\\
         $W_l(n^s)$    &Number of occupied wireless channels of $n^s$\\

         $B(e^s)$      &Bandwidth of $e^s$ \\
         $P^s$         &Set of circle-free paths in ${G^s}$\\

         $M_{e^s}$     &Set of MACSBs on $e^s$, $m_{e^s}\in M_{e^s}$\\

         $s(m_{e^s})$,$t(m_{e^s})$  &Starting and ending FS indices of $m_{e^s}$ \\

         $f(m_{e^s})$  &Number of FSs in $m_{e^s}$ \\

         $\Xi_{max}$   &Maximum fragment size\\

         $\xi_{e^s}^k$ &$k_{th}$ spectral fragment on $e^s$\\

         ${G^r}=({N^r},{E^r})$       &Graph for VNR\\

         $C(n^r)$      &Computing resource required by $n^r$ \\

         $W(n^r)$      &Number of wireless channels required by $n^r$\\

         $loc(n^r)$    &Center of $n^r$'s preference area\\

         $\Delta loc(n^r)$ &Radius of $n^r$'s preference area \\

         $ B(e^r) $        &Amount of bandwidth required by $e^r$\\

         ${s(e^r)}$, ${t(e^r)}$    &End nodes of $e^r$ \\

         $I_{e^s}^{p^s}$    &Binary variable,  equals 1 if ${p^s}$ passes ${e^s}$\\

         $\nu^r$           &Binary variable, equals 1 if ${G^r}$ is successfully \\ & embedded\\

         ${x_{n^s}^{n^r}}$  &Binary variable, equals 1 if ${n^r}$ is embedded onto ${n^s}$\\

         ${y_{p^s}^{e^r}}$  &Binary variable, equals 1 if ${e^r}$ is embedded onto $p^s$\\

         ${z_{e^s,b}^{e^r}}$        &Binary variable, equals 1 if the $b_{th}$ FS on $e^s$ is \\ &  assigned to $e^r$\\

         ${\delta_{e^s,m}^{e^r}}$   &Binary variable, equals 1 if any FS of $m_{e^s}$ on $e^s$ \\ &is assigned to $e^r$\\

         \bottomrule

    \end{tabular}
    \label{tab_notions}
\end{table}

\subsection{VNE Process}
In this paper, we only consider transparent VNE in which not only the number of FSs on each optical link is same but also the number of FSs requested by each virtual link from a VNR is same \cite{ChatterjeeSO15}.
A VNE example is shown in Fig. \ref{fig_VNE_model}, where three layers are considered.
The bottom layer is the physical infrastructure layer, from which physical resources are abstracted to form a pool, working at the middle layer.
The top layer contains VNRs being satisfied by allocating PNs and PLs.
A VNE process includes VNoE and VLiE.
In Fig. \ref{fig_VNE_model}, the VNR$_1$ has the VNoE result $\{a\rightarrow E, b\rightarrow B  \}$, and {VNR}$_2$ has $\{c\rightarrow C, d\rightarrow B, e\rightarrow D\}$.
Additionally, the VNR$_1$ has the VLiE result $\{(a,b)\rightarrow \{(E,A),(A,B)\} \}$ using FSs (\#1, \#2), and VNR$_2$ has $\{(d,c)\rightarrow \{(B,C)\}, (c,e)\rightarrow \{(C,D) \}\}$ using FSs (\#4, \#5, \#6).

\subsection{InP's Revenue and Cost}
The InP will obtain revenue from fulfilling each VNR, shown as
\begin{equation}\label{eq-InP-R}
    \begin{split}
    {\mathbb{R}({G^r})} = {\sum\limits_{n^r \in N^r}}{ \left[\alpha C(n^r) + \kappa W(n^r)\right]} +
                       {\sum\limits_{e^r \in E^r}}{\gamma B(e^r)},
    \end{split}
\end{equation}
where ${\alpha}$, ${\kappa}$ and ${\gamma}$ represent the prices to be charged for per unit of computing resource, radio resource and optical spectrum resource, respectively.

The embedding cost is concerned with the physical resources, which contain not only the resources required by ${G^r}$, but also some resource fragments.
In VNoE, the consumed physical resources include not only the computing and radio resources allocated to ${n^r}$ but also the wasted resource caused by the resulting imbalance in the remaining two types of resources.
As shown in Fig. \ref{fig_VNE_model}(b), it is difficult to utilize the remaining resources at nodes B and C due to the imbalanced resource utilization.
We define the level of imbalance (LoI) at $n^s$ as
\begin{equation}\label{eq-LoI}
    {\varrho(n^s)} = \left | \frac{C_l(n^s)}{C(n^s)} -  \frac{W_l(n^s)}{W(n^s)} \right |,
\end{equation}
where $C_l(n^s)$ and $W_l(n^s)$ represent the occupied computing resource and wireless channels of $n^s$, respectively.
After $n^r$ is embedded, the LoI of $n^s$ changes into
\begin{equation}\label{eq-LoI-change}
\begin{split}
    {\varrho'(n^s)} = \left | \frac{C_l(n^s) + C(n^r)}{C(n^s)} -
       \frac{ W_l(n^s) + W(n^r)}{W(n^s)} \right |.
\end{split}
\end{equation}

The increase in LoI can be calculated as
\begin{equation}\label{eq-LoI-increase}
    {\Delta \varrho(n^s)} =
        \begin{cases}
        \varrho'(n^s) - \varrho(n^s), & \text{if ${ \varrho'(n^s) - \varrho(n^s) > 0}$,} \\
        0, & \text{otherwise.}
        \end{cases}
\end{equation}

Therefore, at the PNs involved in this embedding, the cost of the total resources actually consumed by ${N^r}$ is
\begin{equation}\label{eq-InP-Cn}
    \begin{split}
    {\mathbb{C}_n(G^r, x_{n^s}^{n^r})} = &{\sum\limits_{n^r \in N^r}}{\sum\limits_{n^s \in N^s}}{x_{n^s}^{n^r} \cdot (1 + \Delta \varrho(n^s))} \\
     &\cdot \left[\alpha' C(n^r) + \kappa' W(n^r)\right] ,
    \end{split}
\end{equation}
where binary variable $x_{n^s}^{n^r}$ indicates whether ${n^r}$ is embedded onto ${n^s}$ (${x_{n^s}^{n^r} = 1}$) or not (${x_{n^s}^{n^r} = 0}$), ${\alpha'}$ and ${\kappa'}$ represent the cost of per unit of computing and radio resources, respectively.

\begin{figure}[ht]
    \centering
    \includegraphics[width=80mm]{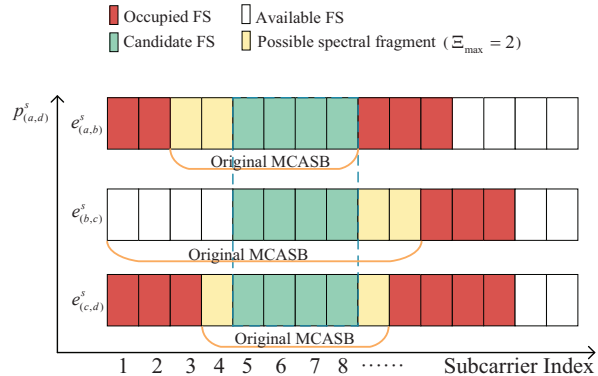}
    \caption {Examples of possible spectral fragments caused by embedding. }
    \vspace{0cm}
    \label{fig_Newly_FS}
\end{figure}

In VLiE, the consumed optical spectrum includes not only the FSs allocated to VLs but also some spectral fragments.
A spectral fragment $\xi_{e^s}^k$ is defined as an MACSB whose length is not more than the maximum fragment size ($f(m_{e^s})\leq \Xi_{max}$).
After a physical lightpath is established for $e^r$, spectral fragments may appear on the MACSBs where the FSs are assigned to $e^r$ along the physical lightpath.
Fig. \ref{fig_Newly_FS} shows examples of possible spectral fragments caused by embedding. It is seen that the spectral fragments appear either at one side of the allocated FSs or at both side of them. The number of FSs in the fragments introduced by the embedding of ${e^r}$ is denoted as $\xi_{e^s}^{e^r}$.

The cost of total spectrum resource consumed by ${E^r}$ is
\begin{small}
\begin{equation}\label{eq-InP-Ce}
    {\mathbb{C}_e(G^r, y_{p^s}^{e^r}, z_{e^s,b}^{e^r})} =  {\sum\limits_{e^r \in E^r}}{\sum\limits_{e^s \in E^s}}
    { y_{p^s}^{e^r}  I_{e^s}^{p^s}  \gamma'  \left [ \sum\limits_{b = 1}^{B(e^s)} z_{e^s,b}^{e^r} + \xi_{e^s}^{e^r}\right]},
\end{equation}
\end{small}
where binary variable $y_{p^s}^{e^r}$ indicates whether ${e^r}$ is embedded onto physical path $p^s$ (${y_{p^s}^{e^r} = 1}$) or not ($y_{p^s}^{e^r} = 0$), binary variable $z_{e^s,b}^{e^r}$ indicates whether the $b_{th}$ FS on optical link $e^s$ is assigned to $e^r$ ($z_{e^s,b}^{e^r} = 1$) or not ($z_{e^s,b}^{e^r} = 0$) and ${\gamma'}$ represents the cost of per unit of spectrum resource.

\subsection{VNE Problem Formulation}
The profit of all VNRs is affected by the number of VNRs accepted successfully, the revenue obtained from fulfilling VNRs and the embedding cost for VNoE and VLiE. Therefore, the embedding profit from all accepted VNRs is
\begin{small}
\begin{equation}\label{eq-InP-P}
 \begin{aligned}
        \mathbb{P}(&\Upsilon,\ x_{n^s}^{n^r},\  y_{p^s}^{e^r},\  z_{e^s,b}^{e^r}) \\
        = &{\sum\limits_{G^r \in \Upsilon}} {\nu^r} \left\{\mathbb{R}(G^r) - \mathbb{C}_n(G^r, x_{n^s}^{n^r}) - \mathbb{C}_e(G^r, y_{p^s}^{e^r}, z_{e^s,b}^{e^r}) \right\},
 \end{aligned}
\end{equation}
\end{small}
where binary variable $\nu^r$ indicates whether VNR ${G^r}$ is successfully embedded onto physical network ${G^s}$ (${\nu^r} = 1$) or not (${\nu^r} = 0$).

The following constraints apply to the VNE problem \cite{GuanZL0NR19}:

Each VN cannot be assigned to more than one PN. A PN cannot be assigned to two different VNs in the same VNR at the same time,
      \begin{small}
      \begin{equation}\label{eq-st-n-121}
        \text{C1}: \sum\limits_{{n^s} \in {N^s}} {x_{n^s}^{n^r}}  = 1, \quad \forall {n^r} \in {N^r},
      \end{equation}
      \begin{equation}\label{eq-st-n<1}
        \text{C2}: \sum\limits_{{n^r} \in {N^r}} {x_{n^s}^{n^r}} \le 1, \quad \forall {n^s} \in {N^s}.
      \end{equation}
      \end{small}

The PN assigned to the VN should have sufficient available resources,
\begin{small}
       \begin{equation}\label{eq-st-n-cc}
        \text{C3}: \sum\limits_{{n^s} \in {N^s}} {x_{n^s}^{n^r}}  \cdot C_a(n^s) \ge C(n^r), \quad  \forall {n^r} \in {N^r},
      \end{equation}
      \begin{equation}\label{eq-st-n-wc}
        \text{C4}: \sum\limits_{{n^s} \in {N^s}} {x_{n^s}^{n^r}}  \cdot W_a(n^s) \ge W(n^r), \quad  \forall {n^r} \in {N^r}.
      \end{equation}
\end{small}

A VN can only be assigned to a PN that is within its preferred aera,
      \begin{small}
       \begin{equation}\label{eq-st-n-lc}
        \begin{aligned}
            \text{C5}: \sum\limits_{{n^s} \in {N^s}} {x_{n^s}^{n^r}}  \Delta loc(n^r) \ge dis[loc(n^r),loc(n^s)],
            \forall {n^r} \in {N^r}.
        \end{aligned}
       \end{equation}
    \end{small}

 If $p^s$ is the light path established for a VL $e^r$, the two end nodes of $p^s$ must host the two end nodes of $e^r$.
    \begin{small}
      \begin{equation}\label{eq-st-l-endnodes}
        \text{C6}: y_{p^s}^{e^r} = x_{s(p^s)}^{s(e^r)} \cdot x_{t(p^s)}^{t(e^r)}, \quad \forall {e^r} \in {E^r}, \forall {p^s} \in {P^s}.
      \end{equation}
      \end{small}

The available spectrum resource of the physical lightpath allocated to a VL should not be less than the link's resource requirement,
     \begin{small}
      \begin{equation}\label{eq-st-l-bc}
        \begin{aligned}
            \text{C7}: \sum\limits_{{p^s} \in {P^s}} {y_{p^s}^{e^r} \cdot {B(p^s)}}  \ge {B(e^r)}, \quad \forall {e^r} \in {E^r}.
        \end{aligned}
      \end{equation}
    \end{small}

On each PL allocated to a VL, the number of the involved FSs should match the embedding requirement,
    \begin{small}
      \begin{equation}\label{eq-st-l-fsc}
        \begin{aligned}
            \text{C8}: \sum\limits_{b = 1}^{B(e^s)} {z_{e^s,b}^{e^r}}  = y_{p^s}^{e^r} \cdot I_{e^s}^{p^s} \cdot {B(e^r)}, \quad \forall {e^r} \in {E^r},
            \forall {e^s} \in {E^s}.
        \end{aligned}
       \end{equation}
    \end{small}

The FSs allocated to each VL must neighbor each other,
    \begin{small}
      \begin{equation}\label{eq-st-FS-neighbor}
            \begin{aligned}
            \text{C9}: t_b(e^r) - s_b(e^r) + 1 = B(e^r), \quad \forall {e^r} \in {E^r}.
            \end{aligned}
       \end{equation}
       \end{small}

The selected contiguous FSs on each optical link involved in the lightpath established for a VL should have the same indices,
     \begin{small}
      \begin{equation}\label{eq-st-FS-aligned}
        \begin{split}
            \text{C10}: \sum\limits_{b = 1}^{B(e^s)} {(z_{e^s,b}^{e^r} - z_{{e'}^s,b}^{e^r})} = 0, \text{ if }y_{p^s}^{e^r} \cdot I_{e^s}^{p^s} \cdot I_{{e'}^s}^{p^s} = 1,  \\ \quad \forall {e^r} \in {E^r}, \forall {e^s, {e'}^s} \in {E^s}.
        \end{split}
      \end{equation}
    \end{small}

The selected FSs by each VL must not conflict, that is, there is no overlapping FS for any two different VLs sharing common optical links,
   \begin{small}
    \begin{equation}\label{eq-st-FS-nonoverlapping}
        \begin{split}
            \text{C11}: \mathbb{I} \left(s_b(e^r) \leq s_b({e'}^r)\right) = \mathbb{I} \left( t_b(e^r) \geq t_b({e'}^r)\right ), \\
            \text{ if } y_{p^s}^{e^r} \cdot y_{p^s}^{{e'}^r} = 1, \forall {e^r, {e'}^r} \in {E^r}.
        \end{split}
    \end{equation}
    \end{small}

We jointly optimize both VNoE and VLiE to maximize the embedding profit from all the accepted VNRs. The joint VNoE and VLiE problem is expressed as follows:
\begin{equation}\label{eq-Problem-P}
 \begin{aligned}
    \mathcal{P: }\quad &\max \limits_{{ \textbf{x, y, z}}}\   {\sum\limits_{G^r \in \Upsilon}} {\nu^r} \left\{\mathbb{R}(G^r) - \mathbb{C}_n(G^r, \textbf{x}) - \mathbb{C}_e(G^r, \textbf{y}, \textbf{z}) \right\} \\
                 &\mathrm{ s.t. }  \ \text{C1} - \text{C11}.
 \end{aligned}
\end{equation}

We can observe that the problem $\mathcal{P}$ is a integer nonlinear optimization problem.
It is challenging to solve $\mathcal{P}$ if using conventional optimization techniques.
Furthermore, the performance of VNoE cannot be fully evaluated until the corresponding VLiE is complete.
Therefore, in order to optimize VNE, all the available VNoE options need to be considered and consequently all the corresponding VLiE options need to be listed for assessment. The most appropriate VNoE and VLiE decision will be made based on its profit contribution. In brief, VNoE and VLiE are closely coupled and mutually dependent.
Therefore, there are two issues in efficiently solving $\mathcal{P}$.
\begin{enumerate}
    \item Is it possible to convert this nonlinear integer optimization problem into another straightforward one?
    \item How can we take into account the interdependence of VNoE and VLiE?
\end{enumerate}


\section{Proposed Algorithm}
\label{sec:proposal}

Considering the two aforementioned difficulties, we convert $\mathcal{P}$ into a bilevel optimization problem. In this study, the VLiE is regarded as the lower-level optimization problem with the goal of maximizing profit by leasing optical spectrum, whereas the VNoE is regarded as the upper-level optimization problem with the goal of maximizing profit by leasing the total resources.
As a result, $\mathcal{P}$ can be changed into a bilevel problem as

\begin{equation}\label{eq-Problem-P1}
    \begin{aligned}
    \mathcal{P}1: \  &\max \limits_{{ \textbf{x}, \textbf{y}, \textbf{z}}}\   {\sum\limits_{G^r \in \Upsilon}} {\nu^r} \left\{\mathbb{R}(G^r) - \mathbb{C}_n(G^r, \textbf{x}) - \mathbb{C}_e(G^r, \textbf{y}, \textbf{z}) \right\} \\
                 &\mathrm{ s.t.}\  (\textbf{y}, \textbf{z}) \in \mathop{\arg\max}_{\textbf{y}, \textbf{z}} \Big\{- \sum\limits_{G^r \in \Upsilon} {\nu^r} \mathbb{C}_e(G^r, \textbf{y}, \textbf{z}) : \text{C6}-\text{C11} \Big\} \\
                 &\quad \ \  \text{C1} - \text{C5}.
    \end{aligned}
\end{equation}

In order to prove the relationship between $\mathcal{P}$ and $\mathcal{P}1$, Lemma \ref{lemma1} must first be introduced \cite{HuangWWL20}.

\begin{lemma}\label{lemma1}
The optimal solution to $\mathcal{P}$ is a feasible solution to $\mathcal{P}1$.
\end{lemma}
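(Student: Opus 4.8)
The plan is to show that any optimal solution $(\textbf{x}^*, \textbf{y}^*, \textbf{z}^*)$ of $\mathcal{P}$ satisfies both the outer constraints C1--C5 and the inner argmax constraint of $\mathcal{P}1$, so that it lies in the feasible region of $\mathcal{P}1$. First I would observe that feasibility for $\mathcal{P}$ already forces C1--C11 to hold at $(\textbf{x}^*, \textbf{y}^*, \textbf{z}^*)$; in particular C1--C5 are satisfied, which takes care of the outer constraints of $\mathcal{P}1$ immediately. The only thing left to verify is that $(\textbf{y}^*, \textbf{z}^*)$ is an optimizer of the lower-level problem $\max_{\textbf{y},\textbf{z}}\{-\sum_{G^r\in\Upsilon}\nu^r\mathbb{C}_e(G^r,\textbf{y},\textbf{z}) : \text{C6}-\text{C11}\}$ for the node assignment $\textbf{x}^*$ fixed.

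The key step is a contradiction argument. Suppose $(\textbf{y}^*, \textbf{z}^*)$ is \emph{not} optimal for the lower-level problem; then there exists a feasible pair $(\hat{\textbf{y}}, \hat{\textbf{z}})$ satisfying C6--C11 (with $\textbf{x}^*$) such that $-\sum_{G^r\in\Upsilon}\nu^r\mathbb{C}_e(G^r,\hat{\textbf{y}},\hat{\textbf{z}}) > -\sum_{G^r\in\Upsilon}\nu^r\mathbb{C}_e(G^r,\textbf{y}^*,\textbf{z}^*)$, i.e. the link-embedding cost strictly decreases. I would then argue that the triple $(\textbf{x}^*, \hat{\textbf{y}}, \hat{\textbf{z}})$ is still feasible for $\mathcal{P}$: constraints C1--C5 involve only $\textbf{x}^*$ and are untouched, and C6--C11 hold by assumption on $(\hat{\textbf{y}}, \hat{\textbf{z}})$. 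Plugging this triple into the objective of $\mathcal{P}$, the terms $\mathbb{R}(G^r)$ and $\mathbb{C}_n(G^r,\textbf{x}^*)$ are unchanged while $-\mathbb{C}_e$ strictly increases, so the objective value of $\mathcal{P}$ at $(\textbf{x}^*,\hat{\textbf{y}},\hat{\textbf{z}})$ strictly exceeds that at $(\textbf{x}^*,\textbf{y}^*,\textbf{z}^*)$, contradicting optimality of the latter for $\mathcal{P}$. Hence $(\textbf{y}^*,\textbf{z}^*)$ must solve the lower-level problem, and $(\textbf{x}^*,\textbf{y}^*,\textbf{z}^*)$ is feasible for $\mathcal{P}1$.

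The main obstacle — really the only subtle point — is handling the acceptance indicator $\nu^r$ and making sure the objective decomposition is clean: one must confirm that changing $(\textbf{y},\textbf{z})$ while keeping $\textbf{x}^*$ fixed cannot turn an accepted VNR into a rejected one in a way that changes $\nu^r$, or, if $\nu^r$ is treated as determined by whether a feasible embedding exists, that the comparison of objectives is still term-by-term valid over the same set of accepted requests. I would address this by noting that $\nu^r$ for a given $G^r$ depends only on the existence of a feasible assignment, and since $(\textbf{x}^*,\hat{\textbf{y}},\hat{\textbf{z}})$ embeds exactly the same set of VNRs as $(\textbf{x}^*,\textbf{y}^*,\textbf{z}^*)$, the values $\nu^r$ are identical in both; thus the per-request revenue and node-cost terms cancel in the comparison and only the $\mathbb{C}_e$ terms matter. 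Once that bookkeeping is pinned down, the contradiction goes through and the lemma follows.
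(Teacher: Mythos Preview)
Your proposal is correct and follows essentially the same contradiction argument as the paper: assume $(\textbf{y}^*,\textbf{z}^*)$ is not lower-level optimal, build a better feasible triple $(\textbf{x}^*,\hat{\textbf{y}},\hat{\textbf{z}})$ for $\mathcal{P}$, and contradict optimality. Your extra care about the acceptance indicators $\nu^r$ is a point the paper glosses over, but otherwise the structure and key steps are identical.
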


\begin{proof}\label{proof1}
Denote the optimal solution to the original problem $\mathcal{P}$ as $\{\textbf{x}^{*}, \textbf{y}^{*}, \textbf{z}^{*}\}$. Since $\{\textbf{x}^{*}, \textbf{y}^{*}, \textbf{z}^{*}\}$ satisfies C1 $-$ C11, the problem we need to prove becomes that $\{\textbf{y}^{*}, \textbf{z}^{*}\}$ is the optimal solution to the lower-level optimization problem of $\mathcal{P}1$.

Suppose the optimal solution to the lower-level optimization problem of $\mathcal{P}1$ is not $\{\textbf{y}^{*}, \textbf{z}^{*}\}$ but $\{\textbf{y}', \textbf{z}'\}$, then
\begin{equation}\label{eq-proof-c'>c*}
     - \sum\limits_{G^r \in \Upsilon} {\nu^r} \mathbb{C}_e(G^r, \textbf{y}', \textbf{z}') >  - \sum\limits_{G^r \in \Upsilon} {\nu^r} \mathbb{C}_e(G^r, \textbf{y}^{*}, \textbf{z}^{*}).
\end{equation}

Obviously, $\{\textbf{x}^{*}, \textbf{y}', \textbf{z}'\}$ is a feasible solution to $\mathcal{P}$. If we substitute (\ref {eq-proof-c'>c*}) into $\mathcal{P}$, then
\begin{equation}\label{eq-proof-p'>p*}
    \begin{aligned}
     &{\sum\limits_{G^r \in \Upsilon}} {\nu^r} \left\{\mathbb{R}(G^r) - \mathbb{C}_n(G^r, \textbf{x}^{*}) - \mathbb{C}_e(G^r, \textbf{y}', \textbf{z}') \right\}\\
     & > {\sum\limits_{G^r \in \Upsilon}} {\nu^r} \left\{\mathbb{R}(G^r) - \mathbb{C}_n(G^r, \textbf{x}^{*}) - \mathbb{C}_e(G^r, \textbf{y}^{*}, \textbf{z}^{*}) \right\}.
     \end{aligned}
\end{equation}

This clearly contradicts the assumption that $\{\textbf{x}^{*}, \textbf{y}^{*}, \textbf{z}^{*}\}$ is the optimal solution to $\mathcal{P}$. Thus, $\{\textbf{y}^{*}, \textbf{z}^{*}\}$ is proved to be the optimal solution to the lower-level optimization problem. This proves that the optimal solution to $\mathcal{P}$ is a feasible solution to $\mathcal{P}1$.
\end{proof}

\begin{theorem}\label{Theorem1}
$\mathcal{P}$ and $\mathcal{P}1$ have the same optimal solution.
\end{theorem}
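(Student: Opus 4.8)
The plan is to prove equality of optimal solutions by combining Lemma~\ref{lemma1} with the reverse inclusion, namely that every feasible solution to $\mathcal{P}1$ is a feasible solution to $\mathcal{P}$, and then comparing objective values. First I would observe that the feasible region of $\mathcal{P}1$ is contained in the feasible region of $\mathcal{P}$: any $\{\textbf{x}, \textbf{y}, \textbf{z}\}$ admissible for $\mathcal{P}1$ satisfies C1--C5 by the outer constraints, and satisfies C6--C11 because $(\textbf{y}, \textbf{z})$ lies in the argmax set of the lower-level problem, whose constraint set is exactly C6--C11. Hence such a point satisfies C1--C11 and is feasible for $\mathcal{P}$. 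Combined with Lemma~\ref{lemma1}, which gives that the optimizer of $\mathcal{P}$ is feasible for $\mathcal{P}1$, the two problems share the same feasible set up to the relevant variables.

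Next I would compare the objective values. Both $\mathcal{P}$ and $\mathcal{P}1$ have the identical objective function $\sum_{G^r \in \Upsilon} \nu^r \{\mathbb{R}(G^r) - \mathbb{C}_n(G^r, \textbf{x}) - \mathbb{C}_e(G^r, \textbf{y}, \textbf{z})\}$. Let $\{\textbf{x}^{*}, \textbf{y}^{*}, \textbf{z}^{*}\}$ be optimal for $\mathcal{P}$ with value $\mathbb{P}^{*}$, and let $\{\hat{\textbf{x}}, \hat{\textbf{y}}, \hat{\textbf{z}}\}$ be optimal for $\mathcal{P}1$ with value $\hat{\mathbb{P}}$. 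By Lemma~\ref{lemma1}, $\{\textbf{x}^{*}, \textbf{y}^{*}, \textbf{z}^{*}\}$ is feasible for $\mathcal{P}1$, so $\hat{\mathbb{P}} \ge \mathbb{P}^{*}$. Conversely, by the feasible-region inclusion just established, $\{\hat{\textbf{x}}, \hat{\textbf{y}}, \hat{\textbf{z}}\}$ is feasible for $\mathcal{P}$, so $\mathbb{P}^{*} \ge \hat{\mathbb{P}}$. Therefore $\mathbb{P}^{*} = \hat{\mathbb{P}}$, and each optimizer of one problem is an optimizer of the other; in particular $\{\textbf{x}^{*}, \textbf{y}^{*}, \textbf{z}^{*}\}$ attains the optimum of $\mathcal{P}1$ and vice versa.

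The main subtlety — the step I expect to need the most care — is the direction showing that a $\mathcal{P}1$-optimal point is actually $\mathcal{P}$-optimal, i.e. that restricting $(\textbf{y}, \textbf{z})$ to be a lower-level minimizer of the spectrum cost does not discard any globally profit-maximizing configuration. The resolution is that the upper objective is decreasing in $\mathbb{C}_e$, so for any fixed $\textbf{x}$ satisfying C1--C5 one may freely replace $(\textbf{y}, \textbf{z})$ by a lower-level optimal pair without decreasing the upper objective; this is essentially the contrapositive of the argument already used in the proof of Lemma~\ref{lemma1}. I would state this monotonicity explicitly and then conclude that the supremum of $\mathcal{P}$ is attained within the feasible set of $\mathcal{P}1$, which together with the inclusion $\mathcal{P}1 \subseteq \mathcal{P}$ forces the optimal sets to coincide.
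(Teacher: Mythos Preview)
Your proposal is correct and follows essentially the same approach as the paper: both use Lemma~\ref{lemma1} to place the $\mathcal{P}$-optimizer inside the feasible set of $\mathcal{P}1$, rely on the inclusion of $\mathcal{P}1$'s feasible set in that of $\mathcal{P}$ (which you state explicitly while the paper uses it implicitly in its contradiction step), and exploit the common objective to conclude equality of optima. The paper phrases the comparison as a short contradiction argument rather than your direct two-sided inequality, but the underlying logic is identical; your third paragraph on monotonicity is an optional elaboration of what Lemma~\ref{lemma1} already supplies and can be safely trimmed.
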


\begin{proof}\label{proof2}
 Denote the optimal solution to $\mathcal{P}$ as $\{\textbf{x}^{*}, \textbf{y}^{*}, \textbf{z}^{*}\}$. By Lemma \ref{lemma1}, we can know that it is a feasible solution to $\mathcal{P}1$. Assume that $\{\textbf{x}', \textbf{y}', \textbf{z}'\}$ is the optimal solution to $\mathcal{P}1$, then
\begin{equation}\label{eq-proof-p*=p1*}
    \begin{aligned}
     &{\sum\limits_{G^r \in \Upsilon}} {\nu^r} \left\{\mathbb{R}(G^r) - \mathbb{C}_n(G^r, \textbf{x}') - \mathbb{C}_e(G^r, \textbf{y}', \textbf{z}') \right\}\\
     & > {\sum\limits_{G^r \in \Upsilon}} {\nu^r} \left\{\mathbb{R}(G^r) - \mathbb{C}_n(G^r, \textbf{x}^{*}) - \mathbb{C}_e(G^r, \textbf{y}^{*}, \textbf{z}^{*}) \right\}.
     \end{aligned}
\end{equation}

Obviously, this contradicts the assumption that $\{\textbf{x}^{*}, \textbf{y}^{*}, \textbf{z}^{*}\}$ is the optimal solution to $\mathcal{P}$. Thus, the optimal solution to $\mathcal{P}$ is the optimal solution to $\mathcal{P}1$. By the same token, it can be proved that the optimal solution to $\mathcal{P}1$ is also the optimal solution to $\mathcal{P}$.
\end{proof}

The modified problem $\mathcal{P}1$ provides some benefits in addition to maintaining the original problem $\mathcal{P}$'s optimal solution.
\begin{enumerate}
    \item The initial problem is convert into two more manageable combinatorial optimization problems with fewer decision variables.
    \item $\mathcal{P}1$ allows for comprehensive consideration of the interdependence between VNoE and VLiE. The quality of each VNoE result can be evaluated because the best VLiE result corresponding to each VNoE result has been produced in the lower level.

\end{enumerate}

\subsection{BiVNE}
To solve this transformed problem, we propose BiVNE, a nested bilevel VNE approach. \textbf{Algorithm \ref{Algorithm-General-BiVNE}} depicts the general framework of BiVNE. The VNRs are fulfilled one by one through BiVNE. During the initialization phase, the feasible candidate node set of each VN is pruned to a smaller scale.
The upper-level and lower-level optimizations are performed at each iteration of the main loop, with the latter nested within the former.
The ACS with a sorting approach is used in the upper-level optimization to generate a VNoE solution.
The Dijkstra algorithm and an exact-fit spectrum slot allocation method are used in the lower-level optimization to obtain the corresponding optimal embedding of VLs based on the provided VNoE result.
The performance of the whole VNE solution can be evaluated after the corresponding optimal VLiE of each VNoE is found~\cite{LiK14,LiFK11,LiKWTM13,CaoKWL12,CaoKWL14,LiDZZ17,LiKD15,LiDZK15,LiCFY19,LiKZD15,LiZKLW14,LiFKZ14}.
The above operation will be repeated until a predefined threshold is reached.
In the next sections, we will introduce BiVNE in detail.

\begin{algorithm}[ht]
 \caption{BiVNE}
  \label{Algorithm-General-BiVNE}
   \begin{algorithmic}[1]
        \STATE $gen = 0$;
        \STATE Obtain a feasible candidate PNs set using \textbf{Algorithm \ref{Algorithm-Can-NSet-Pruning}};  
        \WHILE {$gen < gen_{max}$}
            \STATE Establish a VNoE solution using \textbf{Algorithm \ref{Algorithm-Solution-Construction}}: $\textbf{x} = \{ x_{n^s}^{n^r} \ | \ n^r \in N^r, n^s \in N^s  \} $; 
            \STATE Apply \textbf{Algorithm \ref{Algorithm-link-embedding}} to calculate the optimal embedding of VLs and FS allocation based on the given $\textbf{x}$: $\textbf{y} = \{ y_{p^s}^{e^r}\ | \ e^r \in E^r, p^s \in P^s \}$, $\textbf{z} = \{ z_{e^s,b}^{e^r} \ | \ e^r \in E^r, e^s \in e^s, b < B(e^s)  \}$;
            \STATE Evaluate the total resource consumed by each VNoE solution with the corresponding optimal embedding of VLs and FS allocation;
            \STATE Apply the local search procedure to the iteration-best solution $(\textbf{x}^b, \textbf{y}^b, \textbf{z}^b)$;
            \STATE Based on (\ref{eq-phe-upd-glo}), update the global pheromone in ACS;
            \STATE $gen = gen + 1$;
        \ENDWHILE
        
        \STATE \textbf{Output:} the best VNoE solution with the corresponding optimal VLiE solution and FS allocation $(\textbf{x}^*, \textbf{y}^*, \textbf{z}^*)$.
  \end{algorithmic}
 \end{algorithm}

\subsection{Candidate Node Set Reduction}
In the upper-level optimization problem, the size of the solution space for each VN is determined by the total number of PNs.
For instance, if a VNR contains $|N^r|$ VNs, the solution space of this VNR is ${|N^s|}^{|N^r|}$.
Obviously, the upper-level solution space is so large that it has to be reduced.

As shown in \textbf{Algorithm \ref{Algorithm-Can-NSet-Pruning}}, a method to shrink the solution space is designed on the basis of three facts:
\begin{enumerate}[i)]
    \item \label {fact-1} When a VNR arrives, we can preselect a candidate node set $\Phi_N(n^r)$ for every $n^r$ according to its resource capacity requirement and preferred area: $\Phi_N(n^r) = \{n^s \in N^s \mid C_a(n^s) \ge C(n^r),\  W_a(n^s) \ge W(n^r), \ dis[loc(n^r),loc(n^s)] \le \Delta loc(n^r)\}$.
    \item \label {fact-2}For any PN $n^s$ in the candidate node set $\Phi_N(n^r)$ selected by \ref{fact-1}), if it has fewer attached links than $n^r$, it is not applicable for $n^r$.
    \item \label {fact-3}The PN is still not applicable if the amount of spectrum provided by its attached links cannot meet the requirement of the corresponding VLs.
\end{enumerate}

Thus, the PNs in cases \ref{fact-2}) and \ref{fact-3}) are removed from the original candidate set. In order to make the reduction, the number of the available attached links of $n^s$ is first calculated as
\begin{equation}\label{eq-dl(n^s)}
  d_l(n^s) = \sum\limits_{e_n^s \in E(n^s)} \phi_{n^s}^{e^s},
\end{equation}
where $E(n^s) = \{e^s \in E^s \mid s(e^s) = n^s \parallel t(e^s) = n^s\}$ is the set of attached links of PN $n^s$, $\phi_{n^s}^{e^s} = \mathbb{I} (f(m_{e_n^s}^{max}) \geq B(e^r))$ is a binary variable indicating whether attached link $e_n^s$ has enough available contiguous FSs to meet the bandwidth requirement or not, and $m_{e_n^s}^{max}$ is the MACSB with the largest length on $e^s$.
As a result, the candidate node set of each VN $n^r$ should be reduced by the following constraint
\begin{equation}\label{eq-ds>dr}
  d_l(n^s) \geq d(n^r),
\end{equation}
where $d(n^r)$ is the number of attached links of $n^r$.

\textbf{Algorithm \ref{Algorithm-Can-NSet-Pruning}} depicts the detailed procedure of set reduction. Based on (\ref{eq-st-n-121}) - (\ref{eq-st-n-lc}) and (\ref{eq-ds>dr}), each PN in the candidate node set is evaluated separately. If any PN in the candidate node set does not satisfy (\ref{eq-st-n-121}) - (\ref{eq-st-n-lc}) and (\ref{eq-ds>dr}), it is considered infeasible and removed from the alternative set. After the reduction, each candidate node set has a size smaller than or equal to $|N^s|$. In consequence, the size of the search space is significantly reduced.

\begin{algorithm}[ht]
 \caption{Candidate Node Set Reducing}
    \label{Algorithm-Can-NSet-Pruning}
     \begin{algorithmic}[1]
       \FOR {all $n^r$ in $N^r$}
            \STATE $\Phi_N(n^r) = N^s$,
            \FOR {all $n^s$ in $\Phi_N(n^r)$}
                \IF {$C_a(n^s) < C(n^r)$ or $W_a(n^s) < W(n^r)$ or $dis[loc(n^r),loc(n^s)] > \Delta loc(n^r)$ or $d_l(n^s) < d(n^r)$}
                 \STATE $\Phi_N(n^r) \leftarrow \Phi_N(n^r) \setminus \{n^s\}$,
                \ENDIF
            \ENDFOR
       \ENDFOR
   \RETURN $\Phi_N(n^r)$
  \end{algorithmic}
 \end{algorithm}

\subsection{Upper-Level Optimization}
Normally, the InP will earn greater profit by consuming less resources to fulfill the requirement of each VNR.
Therefore, the upper-level optimization goal can be simplified to minimize the sum of all resources allocated to any VNR and the resulting resource fragments.
The VNoE problem is a combinational optimization problem with the constraint that two VNs in a VNR cannot be embedded onto the same PN \cite{HernandoML16}.
To solve the upper-level problem, we propose an ACS-based algorithm. ACS is chosen for the following reasons:
\begin{enumerate}
    \item Evolutionary computation algorithms are widely applied to combinational optimization problems. Especially, ACS, as a variant of the ant colony optimization algorithm, has demonstrated higher strengths than other evolutionary algorithms in solving real-world problems.
    \item Unlike other evolutionary computation algorithms generating solutions simultaneously, the ACS optimizers construct solutions one by one. By this way, once a PN is assigned to a VN, it will not be assigned again to any other VN in the same VNR.
\end{enumerate}

The proposed ACS-based algorithm BiVNE consists of four operations: solution construction, fitness evaluation, local search and pheromone management.

\subsubsection{Solution Construction} \label{ssubsec-solu-constr}

\begin{algorithm}[ht]
 \caption{VNoE Solution Construction}
  \label{Algorithm-Solution-Construction}
   \begin{algorithmic}[1]
   \STATE Generate the embedding order of VNs based on the sorting strategy;
   \STATE Sort the VNs, and get the sorted set ${N^r}'$;
   \STATE $\textbf{x} = \emptyset$; \label{alg-in-SloCons-X=0}
   \FOR {all $n^r$ in ${N^r}'$}
       \IF {$\Phi_N(n^r) \neq \emptyset$ } 
            \STATE Calculate the probability that each candidate PN in $\Phi_N(n^r)$ based on (\ref{eq-probability});
            \STATE Select a PN $n^s$ from $\Phi_N(n^r)$ based on (\ref{eq-state-transition-rule}) and set $x_{n^s}^{n^r} = 1$;
            \STATE Delete $n^s$ from candidate node set of remaining VNs in the same VNR;
            \STATE Update the local pheromone in ACS based on (\ref{eq-phe-upd-loc});
       \ENDIF
   \ENDFOR
   \RETURN $\textbf{x}$
  \end{algorithmic}
 \end{algorithm}

During each iteration, ants search for all the possible solutions and then choose the optimal one, i.e. the VNoE solution.
By allocating the PNs in the candidate node set to the VNs one by one at a time, each ant builds a potential VNoE solution.
The request of embedding a VNR is rejected if any involved VN's corresponding candidate node set is empty.
The complete procedure of VNoE solution construction is shown in \textbf{Algorithm \ref{Algorithm-Solution-Construction}}.

\begin{figure}[ht]
    \centering
    \subfloat[Random.]{\label{fig_virtual_node_order-random}
    \includegraphics[width=3.5cm]{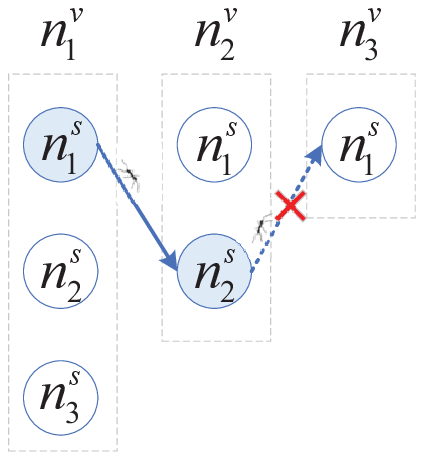}
    }
    \subfloat[Sorted.]{\label{fig_virtual_node_order-sorted}
    \includegraphics[width=3.5cm]{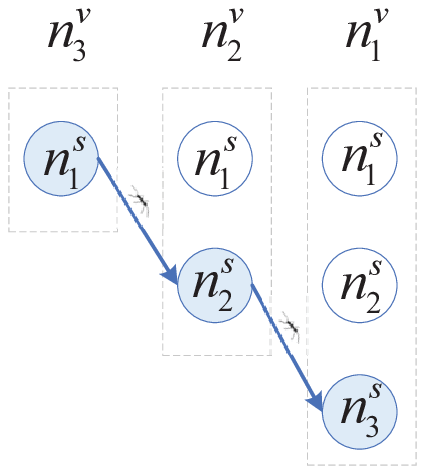}
    }
    \caption{Illustration of operating principle of proposed sorting strategy.}
    \label{fig_virtual_node_order-subs}       
\end{figure}

The VNoE is generally performed by ants in a random order \cite{LiuZDLGZ18, LiangZZZ20}.
However, it is not the best way in solving this problem.
Fig. \ref{fig_virtual_node_order-random} shows an example of performing the embedding randomly.
We assume three VNs (i.e., $n_1^v$, $n_2^v$, $n_3^v$) need to be embedded.
The numbers of PNs in the candidate node sets of nodes $n_1^v$, $n_2^v$, and $n_3^v$ are 3, 2, and 1, respectively.
If the three VNs are embedded in this order, the solution will be infeasible in most cases due to the difficulty of embedding $n_3^v$.
As shown in Fig. \ref{fig_virtual_node_order-random}, the ant assigns a PN for $n_1^r$ first, and the corresponding result is $\{n_1^r \rightarrow n_1^s\}$.
Then, the ant only can assign $n_2^s$ to $n_2^v$ due to constraint C2.
Thus, there is no PN for $n_3^v$ to select.
We can easily find that $n_3^v$ cannot be embedded successfully as long as $\{n_1^r \rightarrow n_1^s\}$ or $\{n_2^r \rightarrow n_1^s\}$.
 The reason is that the VNs with fewer candidate nodes are embedded later.
So, to construct a feasible solution efficiently, we propose a sorting strategy by reranking the VNs in ascending order of the number of their candidate nodes.
As a result, ${N^r}$ becomes a new set ${N^r}'$.
Fig. \ref{fig_virtual_node_order-sorted} depicts an implementation of performing the proposed sorting strategy, where ${N^r}' = \{n_3^v, n_2^v, n_1^v \}$.
In this way, the feasible embedding solution for ${N^r}'$ can easily be found, and the result is $\{ n_3^v \rightarrow n_1^s, n_2^v \rightarrow n_2^s, n_1^v \rightarrow n_3^s \}$.

 As shown in \textbf{Algorithm \ref{Algorithm-Solution-Construction}}, after performing the sorting strategy, the VNs are embedded in turn.
First, the matrix of VNoE result is initialized as $\textbf{x} = \emptyset$ (Line \ref{alg-in-SloCons-X=0}).
Then, for $n^r \in {N^r}'$, if its candidate PN set $\Phi_N(n^r)$ is not empty, the probability that each candidate PN in $\Phi_N(n^r)$ to be selected is given as \cite{DorigoG97}
 \begin{equation}\label{eq-probability}
 \begin{aligned}
     pr(n^r, n^s) = \frac{\tau(n^r, n^s) \cdot {\eta(n^r, n^s)}^\beta}{\sum\limits_{{n^s}' \in \Phi_N(n^r)} \tau(n^r,{n^s}') \cdot {\eta(n^r, {n^s}')}^\beta},\\
    \forall n^r \in {N^r}', \forall n^s \in {\Phi_N(n^r)},
 \end{aligned}
 \end{equation}
where $\tau(n^r, n^s)$ is the pheromone, $\eta(n^r, n^s)$ is the heuristic information, and $\beta$ is a factor that determines the relative weight of $\tau(n^r, n^s)$ and $\eta(n^r, n^s)$.
The heuristic information is used as a subjective incentive for ants to search for a potentially good solution.
In BiVNE, we define it as
\begin{equation}\label{eq-heuristic-information}
   \eta(n^r, n^s) = \frac{1}{\mathbb{C}^I(n^r\rightarrow n^s)},
\end{equation}
where $\mathbb{C}^I(n^r \rightarrow n^s)$ is the increment of the cost of fulfilling the resource requirements of $n^r$ and its attached links with the already embedded VNs.
Thus, the smaller the cost increment, the higher the probability of assigning PN $n^s$ to $n^r$ in (\ref{eq-probability}). $\mathbb{C}^I(n^r \rightarrow n^s)$ is calculated as
\begin{equation}\label{eq-resource-increment-embed-VNN}
  \mathbb{C}^I(n^r \rightarrow n^s)= \mathbb{C}_n(x_{n^s}^{n^r}) + \mathbb{C}_e^{min}( x_{n^s}^{n^r}),
\end{equation}
where $\mathbb{C}_e^{min}( x_{n^s}^{n^r})$ is the minimum increment of cost for link resource consumption if $x_{n^s}^{n^r} = 1$, which can be calculated as
\begin{equation}\label{eq-Ce_min}
  \mathbb{C}_e^{min}( x_{n^s}^{n^r}) = \sum\limits_{n_i^r \in {N_{as}^r}'} h(n^s, \mathcal{M}(n_i^r)) \cdot B(G^r)  \cdot \gamma',
\end{equation}
where ${N_{as}^r}'$ is the set of VNs whose embedding orders are in front of $n^r$, $h(n^s, \mathcal{M}(n_i^r))$ is the length of the shortest path between $n^s$ and the PN $\mathcal{M}(n_i^r)$ host $n_i^r$, and $B(G^r)$ is the number of FSs required by every VL in corresponding VNR $G^r$.

Then, for $n^r$, an ant chooses a PN from the candidate node set ${\Phi_N(n^r)}$ according to the state transition rule based on the pseudorandom proportional given by
{\small
\begin{equation}\label{eq-state-transition-rule}
      n^s =
        \begin{cases}
            \mathop{\arg\max}\limits_{ {n^s}' \in \Phi_N(n^r)} \tau(n^r,{n^s}') \cdot {\eta(n^r, {n^s}')}^\beta, &\text{if ${q \leq q_0}$,} \\
            RWS(pr(n^r, n^s)),                            & \text{otherwise,} 
        \end{cases}
\end{equation}}
where $q$ is a uniformly distributed random number between $[0,1]$, $RWS(pr(n^r, n^s))$ is a PN chosen from $\Phi_N(n^r)$ by using the roulette wheel selection function $RWS(\cdot)$ on the basis of the probability distribution given in (\ref{eq-probability}), and $q_0$ is a parameter that controls the behaviors of ants in exploitation and exploration.

\subsubsection{Fitness Evaluation}
Following the construction of the VNoE solution, the corresponding optimal VLiE solution and FS allocation are obtained in the lower-level optimization, which will be discussed in Section \ref{subsection-lower}.
The performance of a completed solution comprised of the VNoE and VLiE results can then be evaluated~\cite{WuLKZZ19,WuLKZZ17,LiC23,WilliamsLM23,LyuYWHL23}.
The fitness function is the cost for fulfilling the total resource requirements of both VNoE and VLiE, and it is calculated as follows
\begin{equation}\label{eq-fitness}
   \mathbb{C}(\textbf{x},\  \textbf{y}^{*}, \textbf{z}^{*}) = \mathbb{C}_n(\textbf{x}) + \mathbb{C}_e(\textbf{y}^{*}, \textbf{z}^{*}),
\end{equation}
where $\mathbb{C}_n(\textbf{x})$ is the cost for fulfilling the resource requirements of VNs with the VNoE result matrix $\textbf{x}$, and $\mathbb{C}_e(\textbf{y}^{*}, \textbf{z}^{*})$ is the minimum cost corresponding to the optimal VLiE solution and FS allocation result $(\textbf{y}^{*}, \textbf{z}^{*})$ when $\textbf{x}$ is given.

\subsubsection{Local Search}
Following the evaluation of fitness, if the iteration-best solution $ \{\textbf{x}^{b}, \textbf{y}^{b}, \textbf{z}^{b}\}$ is feasible, a local search operation is carried out on it to speed up the convergence.
First, the order of VNs is generated as introduced in subsection \ref{ssubsec-solu-constr}.
Then, each VN is checked according to its order.
Besides the selected PN in the solution, any other PN belonging to the VN's candidate set is judged on whether it can contribute a lower fitness value~\cite{LiKWCR12,LiWKC13,CaoKWLLK15,LiDY18,WuKZLWL15,LiKCLZS12,LiDAY17,LiDZ15,LiXT19,GaoNL19,LiuLC19,LiZ19,KumarBCLB18,CaoWKL11,LiX0WT20,LiuLC20,LiXCT20,WangYLK21,ShanL21,LaiL021,LiLLM21,WuKJLZ17,LiCSY19,LiLDMY20,WuLKZ20,PruvostDLL020,XuLA22,LiLL22,ZhouLM22,ChenLTL22,Williams0M22,FanLT20}.
The original selected PN is replaced by the new one if it can contribute a lower fitness value.
Note that, to ensure constraint C2, the PN for substitution should not be assigned to the other VNs.
After that, the solution $ \{\textbf{x}^{b}, \textbf{y}^{b}, \textbf{z}^{b}\}$ is updated accordingly.

\subsubsection{Pheromone Management}
During the solution construction procedure, each time an ant finds a feasible assignment $n^s$ for $n^r$, the local pheromone is updated on pair $(n^r, n^s)$ accordingly to reduce the probability of other ants making the same assignment. The local pheromone updating rule is
\begin{equation}\label{eq-phe-upd-loc}
  \tau(n^r, n^s) = (1 - \varphi) \cdot \tau(n^r, n^s) + \varphi \cdot \tau_{0},
\end{equation}
where $\varphi$ is the pheromone decay coefficient and $\tau_{0}$ is the initial value of the pheromone.

At the end of each iteration, only the global best ant is allowed to release pheromone.
After all the ants complete their assignments, the best assignment can be found and labeled as $(\textbf{x}^b, \textbf{y}^b, \textbf{z}^b)$.
To increase the pheromone on the good potential assignments, the global pheromone updating rule is applied to $(\textbf{x}^b, \textbf{y}^b, \textbf{z}^b)$.
The rule of global pheromone updating is
\begin{small}
\begin{equation}\label{eq-phe-upd-glo}
  \tau(n^r, n^s) =
  \begin{cases}
        (1 - \rho) \cdot \tau(n^r, n^s) + \rho \cdot \Delta\tau, &\text{if ${x_{n^s}^{n^r} = 1}$,} \\
        \tau(n^r, n^s),                            & \text{otherwise,}
  \end{cases}
\end{equation}
\end{small}
where $\rho$ denotes the pheromone decay parameter, and
\begin{equation}\label{eq-Delta-tau}
  \Delta\tau = \frac{1}{\mathbb{C}(\textbf{x}^b, \textbf{y}^b, \textbf{z}^b)}.
\end{equation}
\subsection{Lower-Level Optimization}\label{subsection-lower}

Similar to the upper-level optimization, the aim of the lower-level optimization can be simplified to minimize the cost of fulfilling the resource requirement of VLs through optimizing the VLiE solution $\textbf{y}$ and the FS allocation $\textbf{z}$ under the given VNoE result $\textbf{x}$, which can be formulated as
\begin{equation}\label{eq-lower-objective}
 \begin{split}
    &\min \limits_{{ \textbf{y, z}}}
   {\sum\limits_{e^r \in E^r}}{\sum\limits_{e^s \in E^s}}
    { y_{p^s}^{e^r} \cdot I_{e^s}^{p^s} \cdot \gamma' \cdot \left [ \sum\limits_{b = 1}^{B(e^s)} z_{e^s,b}^{e^r} + \xi_{e^s}^{e^r}\right]}\\
    & \mathrm{s.t. }  \ \text{C6} - \text{C11}.
    \end{split}
\end{equation}

It can be seen from (\ref{eq-lower-objective}) that the longer the physical path in which a VL is embedded, and/or the more new spectrum fragments generated by this path, the greater the cost of InP.
On this basis, we propose \textbf{Algorithm \ref{Algorithm-link-embedding}} to embed VLs.

\begin{algorithm}[!t]
 \caption{VLiE and FS Assignment}
  \label{Algorithm-link-embedding}
   \begin{algorithmic}[1]
   \STATE ${G^s}' = G^s$; \label{alg-in-LiEmbed-Des-i}
    \FOR {all ${e^s}'$ in ${E^s}'$} \label{alg-in-LiEmbed-Des-s}
        \IF {$f(m_{{e^s}'}^{max}) < B(e^r)$}
             \STATE Delete ${e^s}'$ from ${G^s}'$;
        \ENDIF
    \ENDFOR  \label{alg-in-LiEmbed-Des-t}
    \FOR {all ${e^r}$ in ${E^r}$}
        \STATE Find the shortest path $p^s$ using Dijksra algorithm;
        \STATE Delete $p^s$ from ${G^s}'$;
    \ENDFOR
    \STATE $P^s_e = \{ p^s \}$;
    \IF {$f(m_{P^s_e}^{max}) < B(e^r)$}
        \STATE The VNR is rejected;
    \ELSE
        \STATE set $y_{p^s}^{e^r} = 1$;
        \STATE Find the optimal available continuous FSs $[s_b(e^r), t_b(e^r)]$ with the smallest size of newly-generated spectrum fragment, and set $z_{e^s, b}^{e^r} = 1,\  s_b(e^r) \leq b \leq t_b(e^r)$.
    \ENDIF

   \RETURN $\{\textbf{y}, \textbf{z}\}$
  \end{algorithmic}
 \end{algorithm}

As mentioned before, the VNE is transparent, in other words, all the VLs require the same amount of FSs in a VNR.
To improve search efficiency, in the beginning, we check every PL and remove those that cannot meet the spectrum requirements of the VLs (Line \ref{alg-in-LiEmbed-Des-s} - \ref{alg-in-LiEmbed-Des-t}).
The removing operation is performed on a replica of $G^s$, i.e. ${G^s}'$ (Line \ref{alg-in-LiEmbed-Des-i}), so the links in $G^s$ are not deleted, and the embedding result on ${G^s}'$ can be directly applied to $G^s$.
Afterwards, for each VL, a greedy algorithm is used to find the shortest path for it based on the Dijkstra algorithm.
If the size of MACSB on the path is smaller than the bandwidth requirement of the VL, the corresponding VNR is rejected~\cite{LiWKC13,RuanLDL20,SunL20,LiNGY22}.
Otherwise, the PLs on the shortest path are allocated to the VL.
Subsequently, an FS assignment method is used to select the optimal available continuous FS block which has the smallest size of newly-generated spectrum fragment.
And then, the corresponding FSs are allocated to the VL.
After the shortest path and optimal FSs are assigned to the VL, the VLiE is completed, and then, the shortest path is deleted from ${G^s}'$ to prevent spectrum overlapping (constraint C11).
 After the path selection and the FS assignment procedures, the result $\{\textbf{y}, \textbf{z}\}$ is generated.


\section{Experimental Setup}
\label{sec:settings}
In this section, evaluations are performed to assess the performance of the proposed BiVNE algorithm. We tested two sizes of physical networks:
a realistic Deutsche Telecom (DT) network with 14 nodes and 23 links, and a random network with 50 nodes and 166 links. All the PNs are arranged in a 1000*1000 coverage square.
The number of VNs contained in each VNR is randomly generated within a predefined range, see Table \ref{Simulation_parameters} for specific settings.
The probability of generating a VL between every two VNs is 0.5.
${\alpha = \kappa = \gamma =3}$, and ${\alpha' = \kappa' = \gamma' = 1}$ .

In the BiVNE algorithm, the population size is $NP = 10$, maximum generation number is $gen_{max} = 150$, and the other parameters are set in accordance with \cite{DorigoG97}:
$\beta = 2$, $q_0 = 0.9$, $\varphi = \rho = 0.1$, and $\tau_{0} = (|N^s| \cdot Cost)^{-1} $, where $Cost$ is the total cost produced by the greedy algorithm \cite{ZhaoSB13}.
Table \ref{Simulation_parameters} summarizes the remaining simulation parameters.

\begin{table}[t] \small   
\caption{Simulation Parameters}  
\centering  
   \begin{tabular}{c|lp{3cm}p{2cm}}   

        \toprule

        Parameters          &DT Topology    &Random Topology\\
        \midrule

         $|N^s|$            &14             &50\\
         $|E^s|$            &23             &166 \\
         $C(n^s)$           &[50,100]units  &[50,100]units \\
         $W(n^s)$           &[50,100]       &[50,100] \\
         $B(e^s)$           &[50,100]FS'    &[50,100]FS' \\
         $|N^r|$            &[3,4]          &[3,10] \\
         $\Delta loc(n^r)$  &[200,300]      &[200,300] \\
         $C(n^r)$           &[1,10]units    &[1,20]units \\
         $W(n^r)$           &[1,10]         &[1,20] \\
         $B(e^r)$           &[1,10]FS'      &[1,20]FS' \\

         \bottomrule

    \end{tabular}
    \label{Simulation_parameters}
\end{table}

To assess the performance of the designed algorithm, BiVNE, we compare it with three existing VNE algorithms: Greedy-SP-FF \cite{ZhaoSB13}, LRC-SP-FF \cite {GongWZL14}, and PL-KSP-FF \cite{FanXCCY21}.


\section{Empirical Studies}
\label{sec:experiments}

\begin{figure}[ht]
\centering
\subfloat[14-node DT topology.]{\label{fig-AR-DT} \includegraphics[width=6cm]{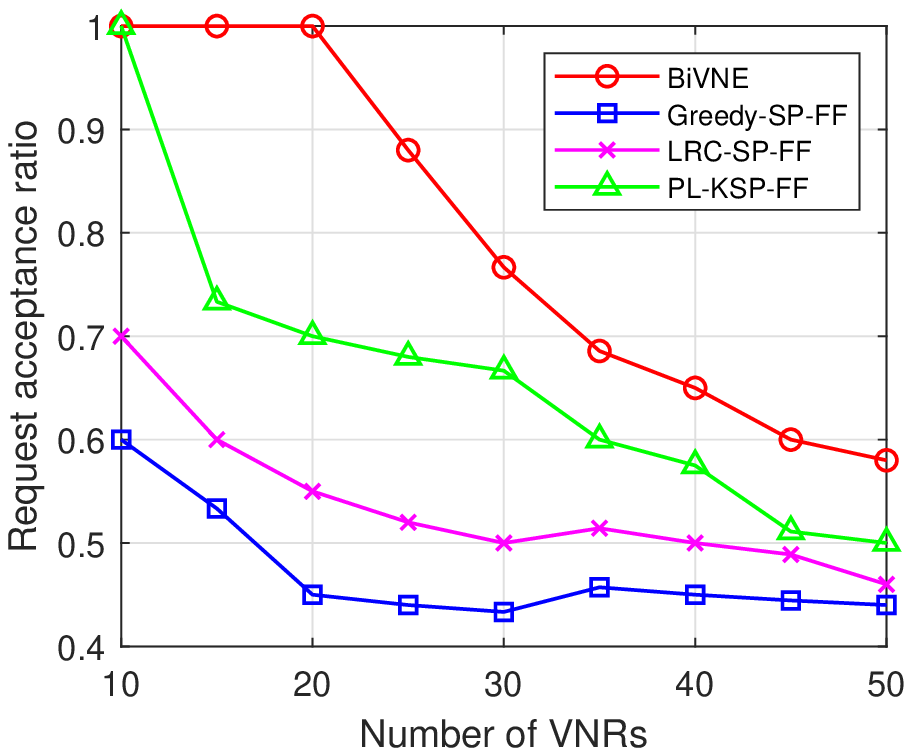}} 
\subfloat[50-node random topology.]{\label{fig-AR-50N} \includegraphics[width=6cm]{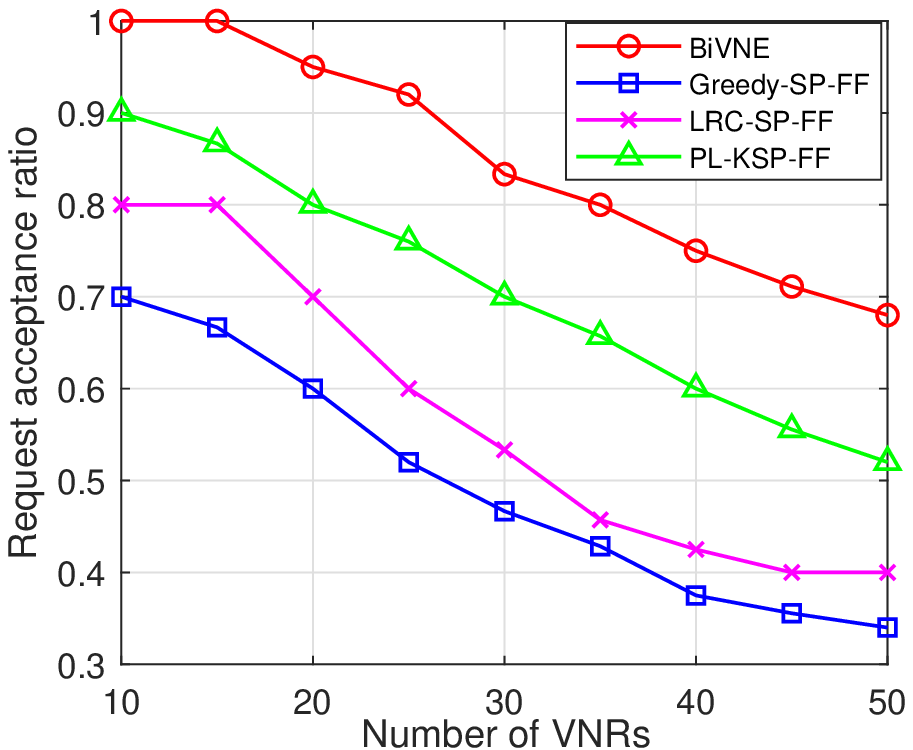}}
\caption{Performance comparison on acceptance ratio.}
\label{fig_AR}
\end{figure}

Fig. \ref{fig-AR-DT} and Fig. \ref{fig-AR-50N} compare the acceptance ratio of requests when the Greedy-SP-FF, the LRC-SP-FF, the PL-KSP-FF and the proposed BiVNE algorithms are implemented in the 14-node DT and the 50-node random networks, respectively.
The acceptance ratio in all the algorithms decrease with the increasing number of the VNRs in these two physical networks.
This is due to the remaining resources in the physical network are gradually exhausted as the number of accepted VNRs increases.
Hence, more VNRs may be rejected because of the lack of physical resources.
Moreover, when compared to the other three algorithms, the proposed BiVNE algorithm consistently provides the highest acceptance ratio.
The reason is that BiVNE improves the probability of generating feasible embedding solutions through not only reducing the candidate node set before embedding the VNRs, but also employing a sorting method for solution construction in the upper-level optimization.
Furthermore, the BiVNE algorithm takes into account the variation of LoI at the PN and the optical spectrum fragmentation on the PL, so that the VNRs are always satisfied at the cost of the smallest amount of resource fragments.
As a result, more VNRs can be accepted due to the fact that more available resources in the physical network can be provided.

\begin{figure}[ht]
\centering
\subfloat[14-node DT topology.]{\label{fig-ave-path-hops-DT}
\includegraphics[width=6cm]{{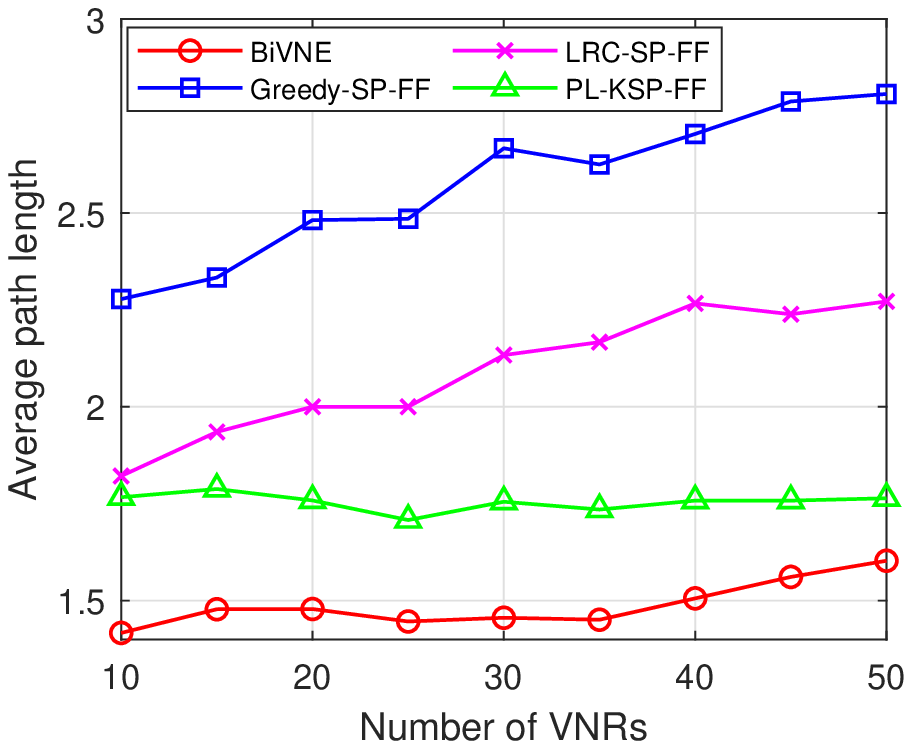}}}
\subfloat[50-node random topology.]{\label{fig-ave-path-hops-50N}
\includegraphics[width=6cm]{{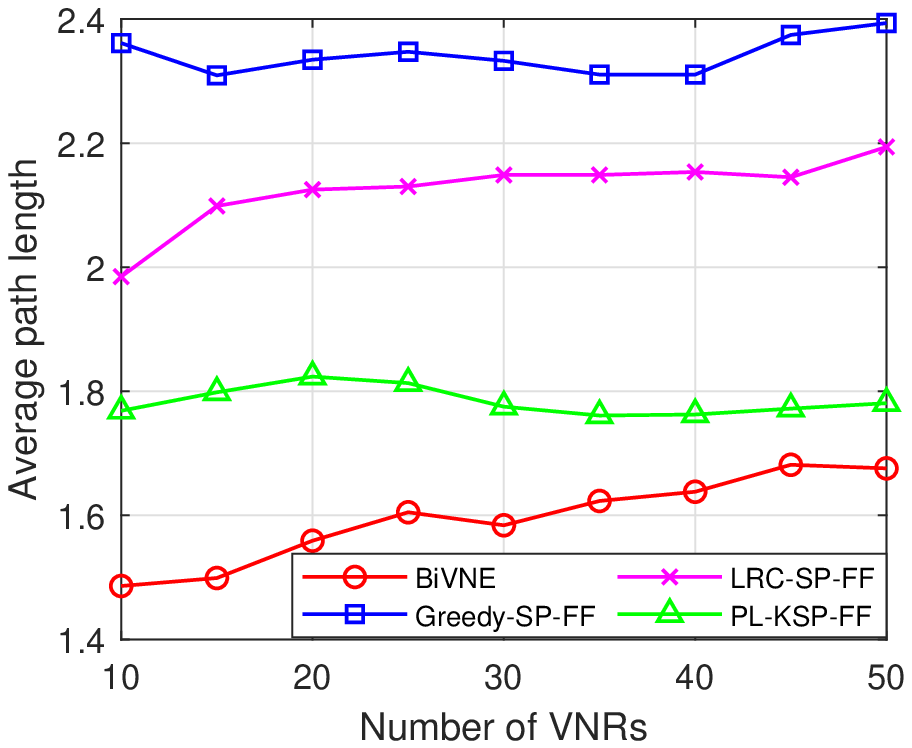}}}
\caption{Performance comparison on average path length.}
\label{fig_average_path_hops}
\end{figure}

Fig. \ref{fig-ave-path-hops-DT} and Fig. \ref{fig-ave-path-hops-50N} depict the performance comparison on the average path length obtained by conducting the four algorithms in the 14-node DT and the 50-node random networks, respectively.
We observe that the proposed BiVNE algorithm always obtains the shortest average path in any case in both the 14-node DT network (shown in Fig. \ref{fig-ave-path-hops-DT}) and the 50-node random network (shown in Fig. \ref{fig-ave-path-hops-50N}), thanks to the joint processing of VNoE and VLiE.
In contrast, the other three algorithms perform VNoE and VLiE separately.
PL-KSP-FF outperforms Greedy-SP-FF and LRC-SP-FF because, besides comparing the amount of resources provided by different PN candidates and their connected links, PL-KSP-FF takes the corresponding path length (i.e., the number of possible involved links) as an additional decision parameter in the procedure of VNoE.
However, the performance of PL-KSP-FF is still worse than that of the proposed BiVNE algorithm.
It is because, VLiE is performed under the only one resulting VNoE condition in the PL-KSP-FF algorithm, while more available VNoE choices are explored and provided for consideration when VLiE is decided in the BiVNE algorithm.
Apart from that, we observe that the average path length increases with the increasing number of VNRs in the proposed BiVNE algorithm, while this trend is not evident in the PL-KSP-FF algorithm.
The reason is, in addition to finding the shortest path like the PL-KSP-FF algorithm, the BiVNE algorithm tries to remain more continuous FSs in the physical networks during the embedding.
This leads to the fact that although the average path length in PL-KSP-FF is more stable, the proposed BiVNE algorithm is able to obtain a shorter average path while bringing a higher acceptance ratio than PL-KSP-FF.

\begin{figure}[ht]
\centering
\subfloat[14-node DT topology.]{\label{fig-R2C-DT}
\includegraphics[width=6cm]{{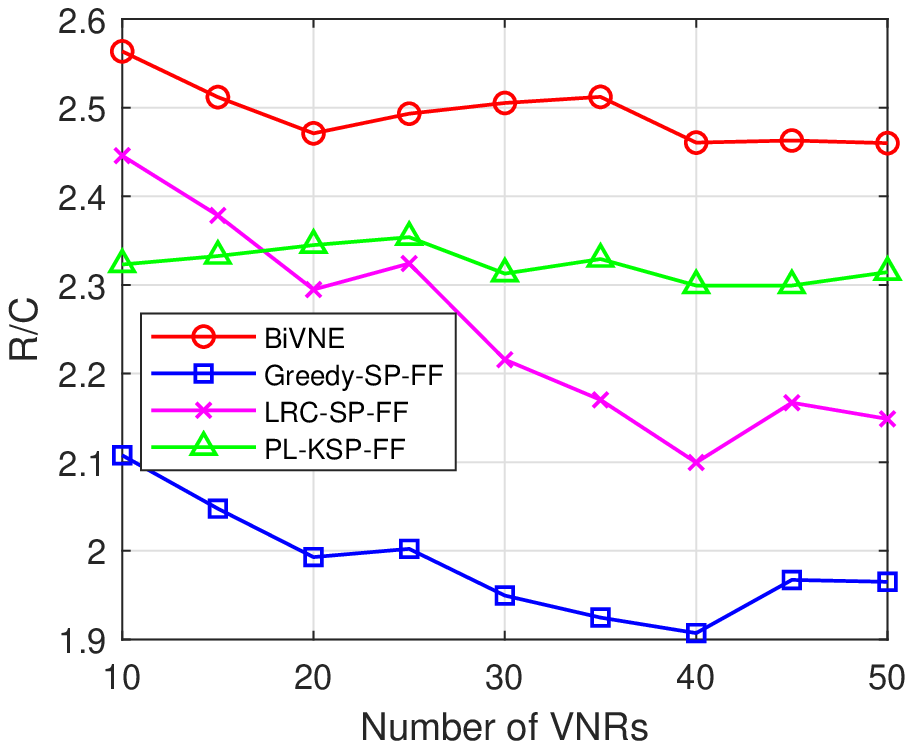}}}
\subfloat[50-node random topology.]{\label{fig-R2C-50N}
\includegraphics[width=6cm]{{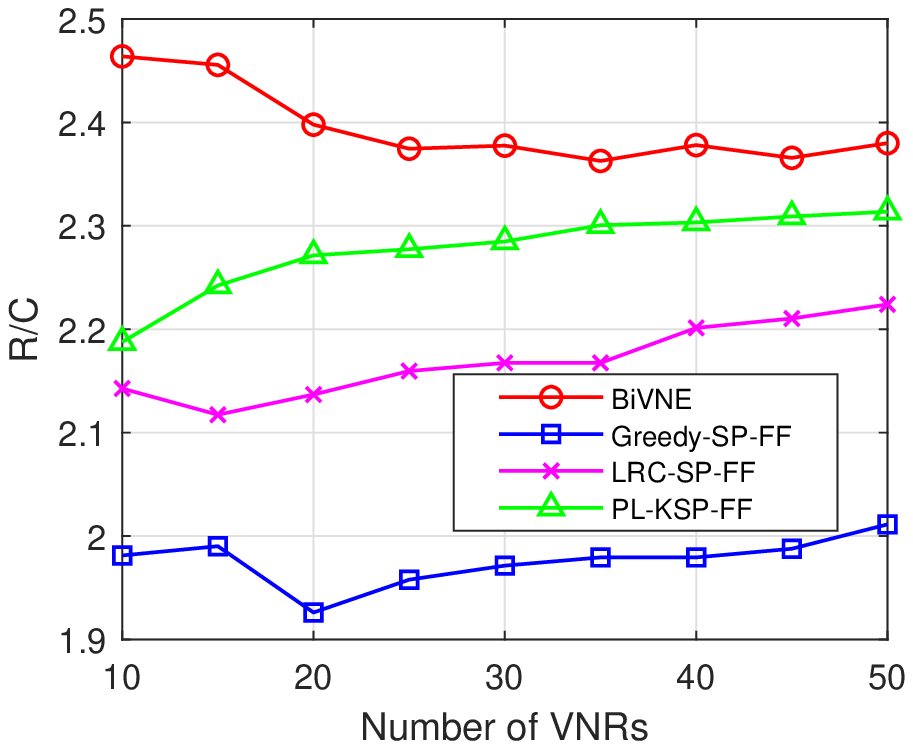}}}
\caption{Performance comparison on R/C ratio.}
\label{fig_R2C}
\end{figure}

Fig. \ref{fig-R2C-DT} and Fig. \ref{fig-R2C-50N} present the R/C ratio results when the four algorithms are implemented in the 14-node DT network and the 50-node random network, respectively.
When the number of VNRs is less than 20, the R/C ratio of PL-KSP-FF is worse than that of LRC-SP-FF, while the opposite is true when the number of VNRs is greater than 20.
The reason is that instead of striving for the minimum resource consumption of a single VNR, PL-KSP-FF tends to accept more VNRs by balancing the computing resource against bandwidth.
It can be observed that BiVNE always keeps the highest R/C ratio, which means that BiVNE can consume fewer resources to obtain the same revenue. The reason is that the BiVNE algorithm always aims to not only achieve the lowest LoI at the physical nodes, but also allocate the shortest path for the virtual link and allocate spectrum in a way that produces the smallest number of optical spectrum fragments.

\begin{figure}[ht]
\centering
\subfloat[14-node DT topology.]{\label{fig-total-profit-DT}
\includegraphics[width=6cm]{{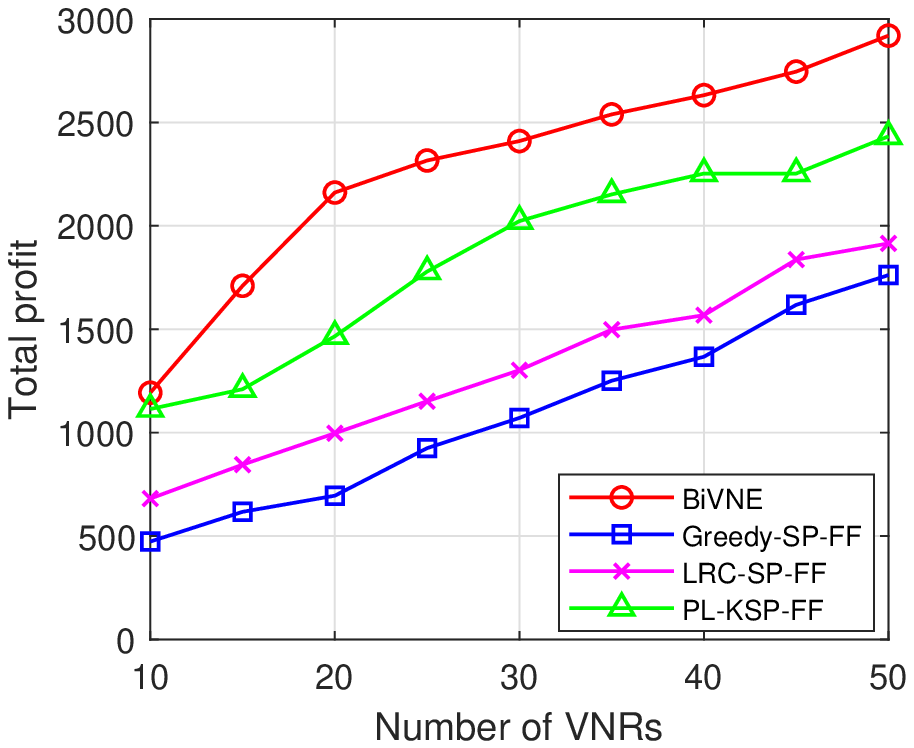}}}
\subfloat[50-node random topology.]{\label{fig-total-profit-50N}
\includegraphics[width=6cm]{{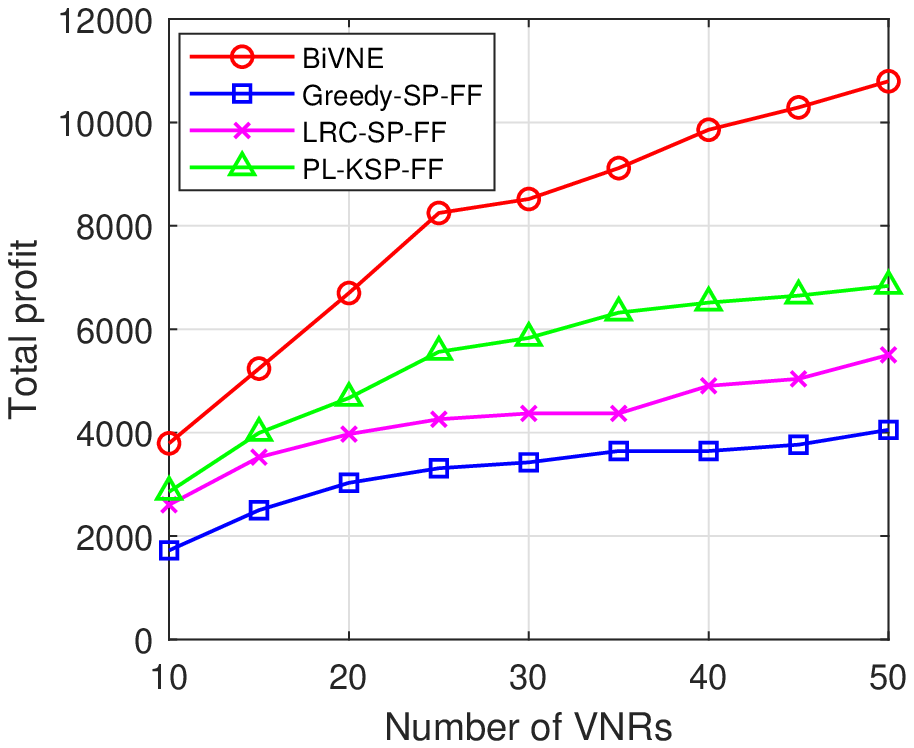}}}
\caption{Performance comparison on total profit.}
\label{fig_total_profit}
\end{figure}
Fig. \ref{fig-total-profit-DT} and Fig. \ref{fig-total-profit-50N} depict the total profit of the InP when the four algorithms are implemented in the 14-node DT network and the 50-node random network, respectively.
We observe that the total profit grows as satisfying more VNRs.
Compared with the three benchmark algorithms, a higher profit can always be achieved by the proposed BiVNE algorithm, benefitting from higher acceptance ratio and R/C ratio.
In addition, the superiority of BiVNE is more obvious in the 50-node random topology (Fig. \ref{fig-total-profit-50N}), especially when more VNRs arrive.
It illustrates our proposed algorithm plays better in large-scale networks.


\section{Conclusions and Future Directions}
\label{sec:conclusions}

In this paper, we studied the problem of multidimensional resource fragmentation-aware VNE in edge networks.
To quantify the resource fragments at PNs and PLs, we defined the LoI and a threshold for judging fragmentation.
To achieve efficient resource utilization, a coupled VNoE and VLiE problem was formulated.
The problem was transformed into a bilevel optimization problem with VNoE for the upper-level optimization problem and VLiE for the lower-level optimization problem.
To solve this problem, we proposed a nested method named BiVNE.
In the upper level of BiVNE, ACS was used to find the most promising VNoE result.
In the lower level of BiVNE, the optimal link resource allocation corresponding to each VNoE decision was determined using the Dijkstra algorithm and the exact-fit spectrum slot allocation approach.
Simulation results reveal that our proposed BiVNE method outperforms the existing algorithms in terms of acceptance ratio, R/C ratio and InP's profit.

\section*{Acknowledgment}

\bibliographystyle{IEEEtran}
\bibliography{IEEEabrv,your_bib}

\end{document}